\providecommand{\tabularnewline}{\\}
\newtheorem*{theorem}{}
\newtheorem*{corollary}{}
\begin{document}
\preprint{APS/PRE}
\title{Energy conversion theorems for some linear steady--states}

\author{L. A. Arias--Hernandez}
\email{larias@ipn.mx}
\affiliation{Departamento de F\'{\i}sica, Escuela Superior de F\'{\i}sica y Matem\'aticas,
Instituto Polit\'ecnico Nacional, U. P. Zacatenco, Edif. 9, 2o Piso,
Ciudad de México, 07738, México.}

\author{G. Valencia--Ortega}
\email{gvalenciao@ipn.mx}
\affiliation{Departamento de F\'{\i}sica, Escuela Superior de F\'{\i}sica y Matem\'aticas,
Instituto Polit\'ecnico Nacional, U. P. Zacatenco, Edif. 9, 2o Piso,
Ciudad de México, 07738, México.}

\author{F. Angulo--Brown}
\email{fangulob@ipn.mx}
\affiliation{Departamento de F\'{\i}sica, Escuela Superior de F\'{\i}sica y Matem\'aticas,
Instituto Polit\'ecnico Nacional, U. P. Zacatenco, Edif. 9, 2o Piso,
Ciudad de México, 07738, México.}

\author{C. R. Martinez--Garcia}
\email{cemartinezg@ipn.mx}
\affiliation{Departamento de Ciencias B\'asicas, Escuela Superior de C\'omputo, Instituto
Polit\'ecnico Nacional, Av. Miguel Bernard, Esq. Av. Miguel Oth\'on de
Mendizabal, Colonia Lindavista, Ciudad de M\'exico 07738, M\'exico.}


\begin{abstract}
One of the main issues that real energy converters present, when they
produce effective work, is the inevitable entropy production. Within
the context of Non-equilibrium Thermodynamics, entropy production
tends to energetically degrade man-made or living systems. On the
other hand, it is also not useful to think about designing an energy
converter that works in the so-called minimum entropy production regime
since the effective power output and efficiency are zero. In this
manuscript, we establish some \textit{Energy Conversion Theorems}
similar to Prigogine's one with constrained forces, their purpose
is to reveal trade-offs between design and the so-called operation
modes for $\left(2\times2\right)$--linear isothermal energy converters.
The objective functions that give rise to those thermodynamic constraints
show stability. A two--meshes electric circuit was built as an example
to demonstrate the Theorems' validity. Likewise, we reveal a type
of energetic hierarchy for power output, efficiency and dissipation
function when the circuit is tuned to any of the operating regimes
studied here: maximum power output ($MPO$), maximum efficient power
($MP\eta$), maximum omega function ($M\Omega$), maximum ecological
function ($MEF$), maximum efficiency ($M\eta$) and minimum dissipation
function ($mdf$). 
\end{abstract}

\pacs{05.70.Ln Nonequilibrium and irreversible thermodynamics; 84.60.Bk Performance characteristics of energy conversion systems; 05.60.-k Transport processes}

\maketitle

\section{\label{int}Introduction}

Since Prigogine formulated his principle of minimum entropy production
in 1947 \cite{Prigogine47}, also known as Prigogine's theorem, it
has been subject to several controversies \cite{Tykodi67,Jaynes80,Mamedov03,Martyushev07,Bertola08,Martyushev13}.
This theorem states that in the linear regime, where the Onsager reciprocal
relations are valid \cite{Onsager31I}, all steady states in which
unconstrained thermodynamic flows vanish are characterized by the
following extremum principle, \textit{``...In the linear regime,
the total entropy production in a system subject to flow of energy
and matter, $\nicefrac{d_{i}S}{dt}=\int\sigma dV$, reaches a minimum
value at the non-equilibrium stationary state...''} \cite{KondePrigo98}.
Despite the criticism received, the Prigogine's theorem has prevailed
mainly due to its experimental verifications; for example, in case
of heat conduction in metallic rods \cite{Rafols92,Danielewicz00},
as well as computer simulations of the same system \cite{Lurie80}.

In this last article, to demonstrate the validity of Prigogine's theorem,
Luri\'e and Wagensberg took as the only fixed force, the following temperature
gradient $F_{0}=T_{h}^{-1}-T_{0}^{-1}$, i.e, it considers extreme
thermal reservoir temperatures of the rod. The rest of the $(n-1)$
slices temperatures of their discrete model are used to construct
the free forces $F_{k}$, for $k=1,2,...,(n-1)$. Under this assumption,
they arrive to the minimum entropy production regime by following
all the steps of the Prigogine procedure. It is important to note
that in \cite{Lurie80}, the so-called phenomenological coefficients
$L$ were used to represent the Fourier Law in the unidimensional
form $J(x,t)=LX(x,t)$, where $L=kT^{2}$. That is, the thermal conductivity
depends on $T^{-2}$ and as was asserted by Jaynes \cite{Jaynes80},
there is no known substance which obeys this relation. This obstacle
was surrounded by Luri\'e and Wagensberg by considering that for small
enough temperature gradients the effects of taking $L=kT^{2}$ are
not important. This claim works reasonably. However, in \cite{KondePrigo98}
by means of the Euler-Lagrange formalism, the authors arrive at the
function $T(x)$, which minimizes the entropy production and is also
linear with respect to $x$-variable, by taking $L=kT^{2}\approx kT_{av}^{2}$,
where $T_{av}$ is the average temperature of the rod. If the temperature
gradient is small enough in the rod one will have that $T(x)=T_{av}\left[1+\varepsilon(x)\right]$
with $|\varepsilon(x)|\ll1$ \cite{Mamedov03}. For example, in \cite{Danielewicz00}
the corresponding experiment was performed for $\Delta T=341.5\text{K}-250\text{K}=51.5\text{K}$
and for each case $|\varepsilon(x)|\sim10^{-4}$ and the above-mentioned
approximation is hold. Another strong support in favor of Prigogine's
theorem in \cite{Klein54}, is that Klein and Meijer proved it by
using statistical mechanics methods. They assumed a process consisting
of mass and energy fluxes through a narrow tube that in turn connect
two containers of an ideal gas. Such as occurs when a gas is enclosed
by rigid adiabatic walls, which at the same time is connected by means
of a diathermic piston with a reservoir of both temperature and pressure,
this type of systems reach the final equilibrium state with their
respective reservoirs without performing work. Something analogous
happens in processes towards the steady state described in the previous
heat conduction examples. In the case of the aforementioned gas interacting
with the temperature and pressure reservoir, and by using the concept
of thermodynamic availability and coupling with a second system give
rise to the maximum useful work theorem \cite{Callen85,Mandl91}.
Similarly, within the context of Linear Irreversible Thermodynamics
(LIT), a large number of steady states arise from the coupling of
spontaneous processes and non-spontaneous ones. These couplings can
exist in both living and man-made systems. Remarkably, this coupling
concept can lead us to describe energy conversion processes in systems
with constrained forces solely \cite{Stucki80,CaplanEssig83,AriPaeAng08}.
Caplan and Essig \cite{CaplanEssig83} developed a theory based on
LIT for the study of linear biologic energy converters, that work
in steady states. These authors introduced concepts such as power
output and efficiency with the purpose to optimize the energy conversion
process. Likewise, they took the usual notion of entropy production.
Later, Stucki \cite{Stucki80} used those ideas in the analysis of
oxidative phosphorylation to establish other working regimes different
than the minimum entropy production. In addition, Arias-Hernandez
et al \cite{AriPaeAng08} used concepts from Finite-Time Thermodynamics
to analyze the above-mentioned energy conversion process. It is clear
that in heat conduction experiments subjected to small temperature
gradients, the systems evolves towards a steady final state of minimal
entropy production. Yet, as several authors have been shown \cite{Stucki80,CaplanEssig83,AriPaeAng08,AngSantCall95,SantAriAng97},
in the case of two or more coupled processes, there are other steady
states where certain quantity of interest to be optimized may come
from natural or artificial needs.

The foregoing can translate into different proposals of trade-offs
through characteristic functions, which are described in Section 2
of this article, i.e, thermodynamic mechanisms can be assumed whose
goals are to maximize some energetic objective functions. Thus, at
least five similar theorems to the Prigogine's one can be stated (Section
3), whose purpose is to show physical restrictions for $\left(2\times2\right)$--linear
isothermal energy converters to operate in some optimal and stable
regimes. In Section 4, we design a two-mesh electrical circuit to
exhibit nonzero energy conversion when the coupling of electrical
currents meet the physical constraints imposed by these ``Energy
Conversion Theorems'' and experimental verification thereof is presented.
Finally, our conclusions are exposed in Section 5.

\section{\label{steady}Steady states without minimum entropy production}

\begin{figure}
\begin{centering}
\includegraphics[scale=0.75]{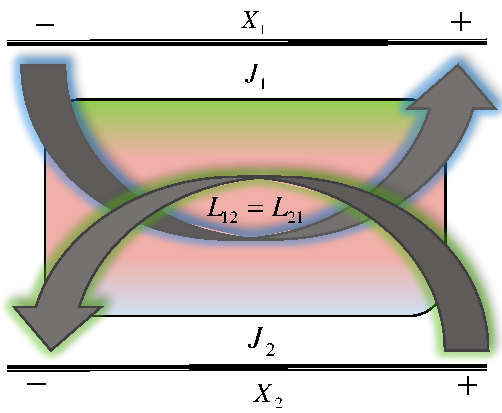} 
\par\end{centering}
\caption{\label{fig:energconv22}Sketch of a simple isothermal energy converter
(2 coupled fluxes promoted by 2 conjugated forces), where $X_{1}<0$
is the force associated with the non-spontaneous flux, while $X_{2}>0$
is the another one related to the spontaneous flux.}
\end{figure}

Several phenomena in nature that are related to the transport of mass,
charge and energy have been described with a good approximation in
the linear regime \cite{Onsager31I,Onsager31II,Callen1,OdumPink55}.
In general, the characteristic functions, that have been defined to
study non-equilibrium processes in a variety of man-made and living
systems, can be expressed in terms of sums of conjugate fluxes ($J_{i}$'s)
and forces ($X_{i}$'s) \cite{Stucki80,CaplanEssig83,AriPaeAng08,AngSantCall95,SantAriAng97,Fronczak07,Volkenstein83}.
Every $J_{i}$'s are the so-called thermodynamic fluxes, such as the
strain rate, reaction speed, electrical current, rate of muscle contraction,
etc. The $X_{i}$'s are defined as the thermodynamic forces, such
as the stress tensor, reaction affinities, electrochemical potential,
muscle tension, etc.

Most of the previous simple linear relations are well-known, for example,
Ohm's Law for electrical current, Fick's Law for diffusion, Fourier
Law for heat flow, etc. The main feature of these uncoupled processes
promoted by unconstrained forces, is their indisputable contribution
to the increase of system’s entropy and a decrease in its free energy,
i.e, the unconstrained and spontaneous fluxes emulate simple diffusive
and non-interacting processes \cite{Sekimoto,SasaTasa06}. However,
there are other cases, characterized by multiple interacting fluxes
subjected to constrained forces. This type of phenomena involve energy
conversion processes, which can be classified of two sets of fluxes:
the one associated with an entropy increase (spontaneous) and the
other one with a decrease of entropy (non-spontaneous) \cite{AriPaeAng08,ValAri17,GonAri19,ValAri20}.
To illustrate it, we can take as an example the transport of an ion
across a cell membrane, which may be influenced not only by its electrochemical
gradient, but also by the influence of other gradient, such as a external
pressure.

In general to characterize a type of processes that occur in open
thermodynamic systems is through the so-called steady states. These
steady states are typical of systems whose processes are kept constant
on time, so the entropy created by the steady flow is more relevant
than the entropy transferred to the surroundings \cite{CaplanEssig83,SantAriAng97,DeGroot62,Prigogine67},

\begin{equation}
\frac{dS_{T}}{dt}=\frac{dS_{int}}{dt}+\frac{dS_{ext}}{dt}>0,\label{eq:entrprod}
\end{equation}
where $\dot{S}_{int}\equiv\sigma=\hat{J}:\hat{X}$ is usually called
the entropy production while $\dot{S}_{ext}=0$, since the entropy
flux from the surroundings is equal to the entropy flux toward the
system. That is, internal irreversibilities are responsible for the
total entropy increments of the thermodynamic universe.

In the case of $(2\times2)$--isothermal linear energy converter
models, the entropy production is also a positive semidefinite quadratic
form \cite{Onsager31I,KondePrigo98,CaplanEssig83,AriPaeAng08,SantAriAng97,Onsager31II,ValAri17,ColinGolds03}:

\begin{eqnarray}
\sigma & = & J_{1}X_{1}+J_{2}X_{2}\nonumber \\
 & = & \left[X_{1},X_{2}\right]\left[\begin{array}{cc}
L_{11} & q\sqrt{L_{11}L_{22}}\\
q\sqrt{L_{11}L_{22}} & L_{22}
\end{array}\right]\left[\begin{array}{c}
X_{1}\\
X_{2}
\end{array}\right]>0,\label{eq:pentrmatr}
\end{eqnarray}
where the coefficient matrix $\hat{L}$ is symmetric. From this quadratic
form, it can be defined the ``degree of coupling'': $q=\nicefrac{L_{12}}{\sqrt{L_{11}L_{22}}}$,
which fulfills $q^{2}\in\left[0,1\right]$, with $L_{11}>0$ and $L_{22}>0$,
this coefficient measures the energy conversion quality \cite{Stucki80,CaplanEssig83,AriPaeAng08}.

Since the purpose of those models is describing energy conversion
phenomena, three energetic functions (process variables) with extreme
conditions can be defined, the dissipation function ($\Phi\equiv T\sigma$),
power output ($P_{O}\equiv-TJ_{1}X_{1}$) and efficiency ($\eta\equiv\nicefrac{P_{O}}{P_{I}}=-\nicefrac{TJ_{1}X_{1}}{TJ_{2}X_{2}}$),

\begin{subequations} \label{eq:potefic} 
\begin{eqnarray}
\Phi & = & \left(x^{2}+2qx+1\right)TL_{22}X_{2}^{2}\\
P_{O} & = & -x\left(x+q\right)TL_{22}X_{2}^{2}\\
\eta & = & -\frac{\left(x+q\right)x}{qx+1},
\end{eqnarray}
\end{subequations} here we introduce a performance parameter $x=\sqrt{\nicefrac{L_{11}}{L_{22}}}\nicefrac{X_{1}}{X_{2}}$,
called the force ratio \cite{Tribus61}. It measures the cross effect
between two potentials \cite{CaplanEssig83,AriPaeAng08}.

\begin{figure}
\begin{centering}
\includegraphics[scale=0.51]{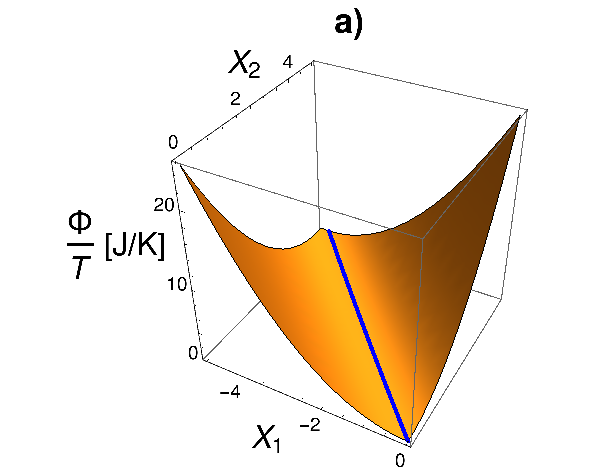} 
\includegraphics[scale=0.55]{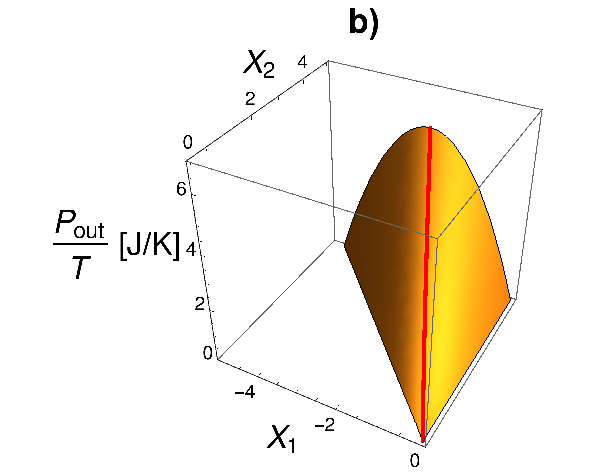} 
\includegraphics[scale=0.55]{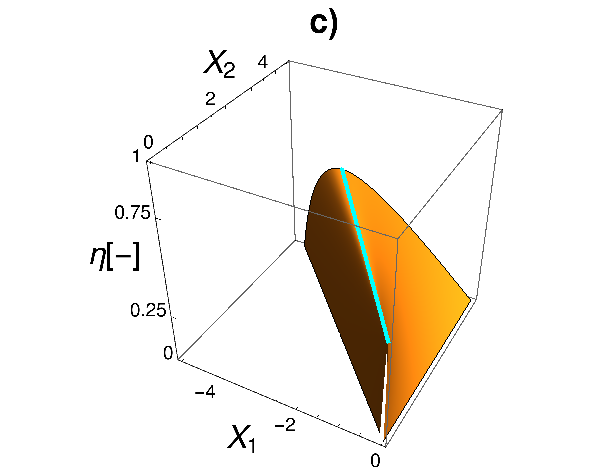} 
\par\end{centering}
\caption{\label{fig:dispotefic}Energetic functions versus $X_{1}$ (driven
force) and $X_{2}$ (driver force) for an isothermal linear energy
converter. Normalized dissipation function (a), normalized power output
(b) and efficiency (c), all of them are plotted for a fixed value
of $q=0.97$. They reach their extreme values at $X_{1}^{mdf}=-\sqrt{\nicefrac{L_{22}}{L_{11}}}\,X_{2}\,q$
(blue line), $X_{1}^{MPO}=-\sqrt{\nicefrac{L_{22}}{L_{11}}}\,X_{2}\,\nicefrac{q}{2}$
(red line) and $X_{1}^{M\eta}=-\sqrt{\nicefrac{L_{22}}{L_{11}}}\,X_{2}\,\nicefrac{q}{\left(1+\sqrt{1-q^{2}}\right)}$
(cyan line), respectively.}
\end{figure}

Under this linear energy converters scheme (see Fig. \ref{fig:energconv22})
\cite{CaplanEssig83,AriPaeAng08,OdumPink55}, $X_{1}$ can be defined
as the driven force while $X_{2}$ as the driver force, that is, $X_{1}=X_{1}(q,X_{2})$.
Fig \ref{fig:dispotefic} shows the extremes in the three functions
($\Phi$, $P_{O}$ and $\eta$) at different values of $\left(X_{1},X_{2}\right)$,
and at the same time, they are associated to three different operation
modes, the minimum dissipation function ($mdf$), maximum power output
($MPO$) and maximum efficiency ($M\eta$). The optimal values that
$x$ adopts for each of the process variables in Eq. \ref{eq:potefic}
are: \begin{subequations} \label{eq:opXvp} 
\begin{eqnarray}
x^{mdf} & = & -q\\
x^{MPO} & = & -\frac{q}{2},\\
x^{M\eta} & = & -\frac{q}{1+\sqrt{1-q^{2}}};
\end{eqnarray}
\end{subequations} consequently, other objective functions express
different trade--offs between the energetic functions $\Phi$, $P_{O}$
and $\eta$ can be defined, therefore they also have extreme values
(see Fig. \ref{fig:funobj}). The objective functions that we will
study to characterize their optimal operation modes are \cite{Stucki80,SantAriAng97,Calvoetal01}:
\begin{subequations} \label{eq:compfunc} 
\begin{eqnarray}
E_{F} & = & -\left(2x^{2}+3xq+1\right)TL_{22}X_{2}^{2}\\
\Omega & = & \left[\eta_{M}\left(qx+1\right)-2x\left(x+q\right)\right]TL_{22}X_{2}^{2}\\
P_{\eta} & = & \frac{\left[\left(x+q\right)x\right]^{2}}{qx+1}TL_{22}X_{2}^{2},
\end{eqnarray}
\end{subequations} where $E_{F}=P_{O}-\Phi$ is the ecological function
\cite{Angulo91}, $\Omega=\left(2-\nicefrac{\eta_{M}}{\eta}\right)P_{O}$
is the so--called omega \cite{Calvoetal01}, with $\eta_{M}$ the
value of efficiency under the conditions that maximize it, i.e., Eq.
\ref{eq:potefic}c evaluated at Eq. \ref{eq:opXvp}c, and $P\eta=\eta P_{O}$
is the efficient power \cite{Stucki80,Yilmaz06}. Each of the previous
objective functions {[}Eqs. (\ref{eq:potefic}) and (\ref{eq:compfunc}){]}
give us optimal performance modes, such as maximum ecological function
($MEF$), maximum $\Omega$--function ($M\Omega$) and maximum efficient
power ($MP\eta$), for a linear energy converter, and their optima
$x$--values are, \begin{subequations} \label{eq:opXvp2} 
\begin{eqnarray}
x^{MEF} & = & -\frac{3q}{4}\\
x^{M\Omega} & = & -\frac{q\left(4-q^{2}+4\sqrt{1-q^{2}}\right)}{4\left(1+\sqrt{1-q^{2}}\right)^{2}}\\
x^{MP\eta} & = & -\frac{4+q^{2}-\sqrt{16-16q^{2}+q^{4}}}{6q}.
\end{eqnarray}
\end{subequations}

The above--mentioned physically accessible characteristic points
can be viewed in a $P_{O}$ vs. $\eta$ plane (see Fig. 4 of \cite{AriPaeAng08}),
in a similar way to the behavior of heat engines that operate in cycles
between two thermal reservoirs \cite{AriPaeAng08,ChenSun04}.

\subsection{Optimal performance modes within $\Phi$ vs. $\eta$ and $P_{O}$
vs. $\eta$ spaces}

\begin{figure}
\begin{centering}
\includegraphics[scale=0.45]{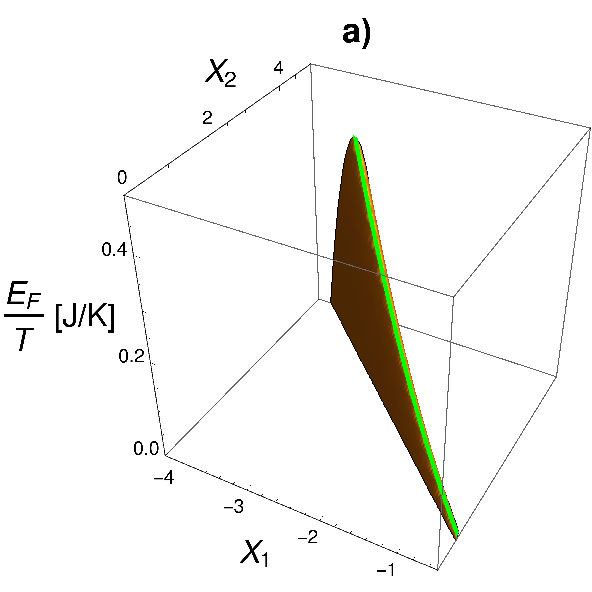} 
\includegraphics[scale=0.58]{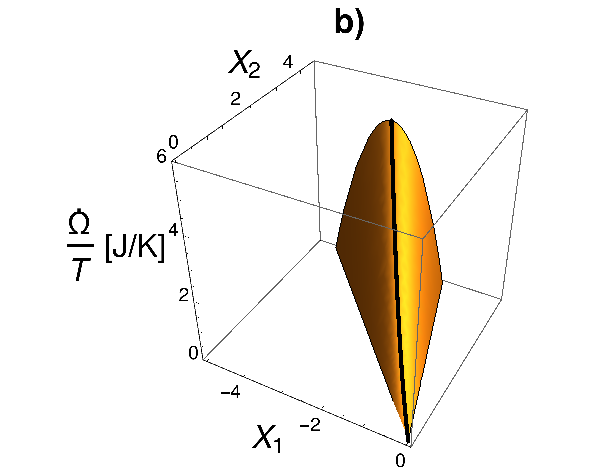} 
\includegraphics[scale=0.58]{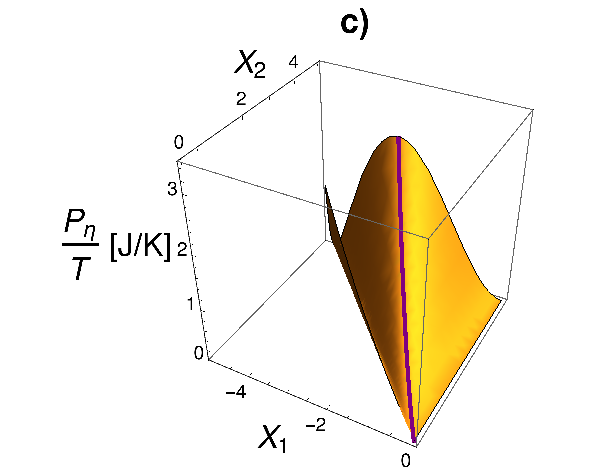} 
\par\end{centering}
\caption{\label{fig:funobj}Objective functions versus $X_{1}$ (driven force)
and $X_{2}$ (driver force) for an isothermal linear energy converter.
Normalized ecological function (a), normalized omega function (b)
and normalized efficient power (c), all of them are plotted for a
fixed value of $q=0.97$. They reach their extreme values at $X_{1}^{MEF}$
(green line), $X_{1}^{M\Omega}$ (black line) and $X_{1}^{MP\eta}$
(purple line), respectively.}
\end{figure}

In describing the performance of a linear energy converter, the process
variables can define a configuration space to display all physically
possible realizations. Then, $\Phi=\Phi(\eta)$ and $P_{O}=P_{O}(\eta)$,
this parametric way of representing the dissipation and the power
output of a linear energy converter, for $q\in\left[q_{min},1\right]$,
allows us to note that the dissipation function has an inverse decreasing
monotonous behavior, that is, when $q$ decreases the limit value
of $\Phi$ increases while $\eta$ diminishes \cite{AriPaeAng08}.
From the parametric graphs depicted in \cite{AriPaeAng08,ValAri20,Apertet13,Garcia08},
we can observe that when the quality of the coupling between spontaneous
and non-spontaneous fluxes is low, entropy production increases. On
the other side, the well-known loop--shaped curves for heat engines
\cite{HulGor92} arise in LIT energy converters when $P_{O}=P_{O}\left(\eta,q\right)$.
These loops intrinsically show the existence of extra--thermodynamic
conditions that dynamically constrain the processes of a linear energy
converter. Each loop has two optimal points of interest \cite{AriPaeAng08,ValAri20},
the one that corresponds to maxima $P_{O}$ points and the other one
to maximum-$\eta$ points, between these two points the other performance
regimes can be achieved \cite{LevValBar20}. This zone reveals energy
conversion with high power output, high efficiency, and low dissipated
energy.

When the operation modes associated with the three trade-off functions,
are evaluated in the so-called process variables ($\Phi$, $P_{O}$,
$\eta$) the following hierarchies are established (see Appendix A),
\begin{subequations} \label{eq:hierDPE} 
\begin{eqnarray}
\hspace*{-0.75cm}\Phi_{mdf}<\Phi_{M\eta}<\Phi_{MEF} & < & \Phi_{M\Omega}<\Phi_{MP\eta}<\Phi_{MPO}\\
\hspace*{-0.75cm}P_{O}^{mdf}<P_{O}^{M\eta}<P_{O}^{MEF} & < & P_{O}^{M\Omega}<P_{O}^{MP\eta}<P_{O}^{MPO}\\
\hspace*{-0.75cm}\eta_{mdf}<\eta_{MPO}<\eta_{MP\eta} & < & \eta_{M\Omega}<\eta_{MEF}<\eta_{M\eta}
\end{eqnarray}
\end{subequations}

The infinite optimal performance regimes located between $\left[P_{O}^{MPO},\,P_{O}^{M\eta}\right]$
can be achieved when the flux associated with the driven force is
tuned by means of the coupling coefficient with the flux associated
with the driver force. That is, operation modes are linked with objective
functions, which in principle can be built through the process variables.

\section{Energy conversion theorems for some linear steady--states}

In general, the validity limits for the hypotheses of LIT can be experimentally
verified, using several non-equilibrium systems characterized by continuous
variables \cite{KondePrigo98,Danielewicz00,Miller60,Hanley69}. In
particular, the Onsager relations have been shown to reflect the experimental
results for small thermodynamic gradients, that is, the postulates
of LIT are valid in situations close to some steady state \cite{Onsager31I,KondePrigo98,Onsager31II,DeGroot62,Prigogine67}.

In systems whose purely spontaneous processes are promoted by constrained
thermodynamic forces, the entropy production will always adapt to
a condition in which the characteristic steady state causes the same
systems to dissipate the least energy possible to the surroundings
(minimum entropy production's principle) \cite{KondePrigo98,Volkenstein83,DeGroot62,Prigogine67}.
On the other hand, energy converters open up a range of physically
accessible steady states, which represent thermodynamic constraints
under boundary conditions that correspond to optimal operating criteria
\cite{Stucki80,AriPaeAng08,AngSantCall95,SantAriAng97,ValAri17,GonAri19}.
These processes are characterized by a set of fixed forces ($X_{i}>0$),
associated with spontaneous fluxes, and another set of constrained
forces ($X_{j}<0$), subjected to an external condition and associated
with non-spontaneous fluxes.

For the case of energy converters with two constrained forces and
two coupled fluxes, a ``constrained minimum entropy production theorem''
can be claimed in terms of the dissipation function ($\Phi$). This
new proposal can be also stated as the ``minimum dissipation function
theorem'' \textbf{(mdf}--\textbf{THEOREM}), as follows \cite{ColinGolds03,Garcia08}:

\begin{theorem} \textbf{\textsl{\large{}mdf--Theorem}} When a non
equilibrium steady--state system is characterized by two generalized
forces $X_{1}$ (the one associated with driven processes) and $X_{2}$
(another one associated with driver processes), it is driven to a
steady state when the driver force is fixed. Then, under the condition
of \textbf{minimum dissipation function}, the $J_{1}$ flux vanishes.
\end{theorem}
\begin{proof}
Let us take the mathematical expression for dissipation function {[}Eq.
(\ref{eq:pentrmatr}){]}. By calculating the derivative of $\Phi$
with respect to $X_{1}$ assuming $X_{2}$ fixed, we obtain: 
\begin{equation}
\left(\frac{\partial\Phi}{\partial X_{1}}\right)_{X_{2}}=T\left[J_{1}+L_{11}X_{1}+q\left(\sqrt{L_{11}L_{22}}\right)X_{2}\right].\label{partialphi}
\end{equation}

By hypothesis $\left(\frac{\partial\Phi}{\partial X_{1}}\right)_{X_{2}}=0$;
therefore: 
\begin{equation}
J_{1}=0.\label{eq:condforflux1mfd}
\end{equation}
\end{proof}
In order to establish a trade--off between design and operation mode
of a linear energy converter, we will set out the following corollary
that results from canceling the driven process.

\begin{corollary} \textbf{\textsl{\large{}mdf--COROLLARY}} If the
degree of coupling between the processes of a non equilibrium steady--state
system is measured by the coefficient $q$ and it is also operating
under the \textbf{mdf}--regime, the cross effect between both generalized
forces, denoted by the parameter $x$, is given by $x=-q$. \end{corollary}
\begin{proof}
From the \textbf{$mdf$--THEOREM}, the constraint $J_{1}=0$ lead
us to write the force $X_{1}$ as: 
\begin{equation}
X_{1}=-q\left(\sqrt{\frac{L_{22}}{L_{11}}}X_{2}\right).\label{x1mdf}
\end{equation}

By using the definition for the performance parameter $x$(sec. \ref{steady}),
we obtain 
\begin{equation}
x=-q.\label{eq:xmdf}
\end{equation}
\end{proof}
The steady states that can be identified within a linear energy converter
correspond to the coupling of two observable processes, the so--called
spontaneous and other non-spontaneous. This distinction was not fully
addressed by Prigogine, since in his minimum entropy production principle
statement, he made no distinction about the nature of thermodynamic
forces in energy transfer or energy conversion processes. In this
way, we propose, by means of a simple step, the minimum dissipation
function theorem for linear energy converters in steady--state, the
energetic version of the ``constrained minimum entropy production
theorem''. In fact, each of the optimization criteria presented in
Section \ref{steady} satisfies a variational principle and can be
represented by specific flux--force relations \cite{SasaTasa06,Zupanivicetal04}.
Thus, energy conversion processes can be associated with different
stationary solutions. When the operation modes coincide, their steady
states are adjusted to one under the same initial condition (the same
degree of coupling $q$).

\subsection{Similar theorems to the ``constrained minimum entropy production
theorem'' for $\left(2\times2\right)$--energy converters}

As we have evidenced in Section \ref{steady}, in a large number of
non equilibrium steady--state processes that are carried out in both
living and man--made systems, appropriate constrains (boundary conditions)
can be found so that the energy conversion takes place in particular
operation modes. If the purpose in the operation of a $\left(2\times2\right)$--isothermal
linear energy converter is not to minimize the entropy production,
the steady states that dynamically constrain the processes within
the converter in the optimal performance regimes ($MPO$, $M\eta$,
$MEF$, $M\Omega$ and $MP\eta$) \cite{Stucki80,SantAriAng97,OdumPink55,Calvoetal01,KedemCapl65},
steady states revealed through the energy conversion theorems similar
to the Prigogine's one with constrained forces (\textbf{mdf--THEOREM}),
as well as their corresponding corollaries. The optimal regimes here
analyzed were taken from the field of Finite Time Thermodynamics (FTT)
\cite{AriPaeAng08,Angulo02}.

\begin{theorem} \textbf{\textsl{\large{}MPO-{}-THEOREM}} When a non
equilibrium steady--state system is characterized by two generalized
forces $X_{1}$ (the one associated with driven processes) and $X_{2}$
(another one associated with driver processes), it is constrained
to a particular steady state when the driver force is fixed. Then,
under the condition of \textbf{maximum power output}, the $J_{1}$
flux is equal to $-L_{11}X_{1}$. \end{theorem}
\begin{proof}
Let us consider the mathematical expression for power output {[}Eq.
(\ref{eq:potefic}){]}. The partial derivative of $P_{out}$ with
respect to $X_{1}$ assuming $X_{2}$ fixed, is: 
\begin{equation}
\left(\frac{\partial P_{O}}{\partial X_{1}}\right)_{X_{2}}=-T\left(J_{1}+L_{11}X_{1}\right).\label{dpo}
\end{equation}

By hypothesis $\left(\frac{\partial P_{O}}{\partial X_{1}}\right)_{X_{2}}=0$,
we get: 
\begin{equation}
J_{1}=-L_{11}X_{1}.\label{condparflux1mpo}
\end{equation}
\end{proof}
\begin{corollary}\textbf{ }\textbf{\textsl{MPO--COROLLARY}} If the
degree of coupling between the processes of a non equilibrium steady--state
system is measured by the coefficient $q$ and it is also operating
under the \textbf{MPO} regime, the cross effect between both generalized
forces, denoted by the parameter $x$, is given by $x=-\frac{q}{2}$.
\end{corollary}
\begin{proof}
From the \textbf{$MPO$--THEOREM}, the constraint $J_{1}=-L_{11}X_{1}$
lead us to write the force $X_{1}$ as: 
\begin{equation}
X_{1}=-\frac{q}{2}\left(\sqrt{\frac{L_{22}}{L_{11}}}X_{2}\right).\label{x1mpo}
\end{equation}

By using the definition for the performance parameter $x$(sec. \ref{steady}),
we have 
\begin{equation}
x=-\frac{q}{2}.\label{eq:xmpo}
\end{equation}
\end{proof}
\begin{theorem}{\large{} $M\eta$}\textbf{\textsl{\large{}-{}-THEOREM}}
When a non equilibrium steady--state system is characterized by two
generalized forces $X_{1}$ (the one associated with driven processes)
and $X_{2}$ (another one associated with driver processes), it is
arrested at a steady state when the driver force is fixed. Then, under
the condition of \textbf{maximum efficiency}, the $J_{1}$ flux is
equal to $-\left(\frac{1-\eta}{1+\eta}\right)L_{11}X_{1}$. \end{theorem}
\begin{proof}
Let us take the mathematical expression for efficiency {[}Eq. (\ref{eq:potefic}){]}.
By calculating the partial derivative with respect to $X_{1}$ assuming
$X_{2}$ fixed, we obtain 
\begin{equation}
\left(\frac{\partial\eta}{\partial X_{1}}\right)_{X_{2}}=-\frac{P_{O}}{P_{I}^{2}}\left(\frac{\partial P_{I}}{\partial X_{1}}\right)+\frac{1}{P_{I}}\left(\frac{\partial P_{O}}{\partial X_{1}}\right),\label{deta}
\end{equation}
where $P_{I}$ is the rate of incoming energy per temperature unit,
this amount of energy is associated with spontaneous flux. The derivative
$\left(\frac{\partial P_{O}}{\partial X_{1}}\right)$ is given in
\textbf{$MPO$--THEOREM}, while $\left(\frac{\partial P_{I}}{\partial X_{1}}\right)$
is: 
\begin{equation}
\left(\frac{\partial P_{I}}{\partial X_{1}}\right)=Tq\sqrt{L_{11}L_{22}}X_{2}.\label{dpi}
\end{equation}

Then, we can rewrite $\left(\frac{\partial\eta}{\partial X_{1}}\right)_{X_{2}}$
as follows: 
\begin{equation}
\left(\frac{\partial\eta}{\partial X_{1}}\right)_{X_{2}}=-\frac{T}{P_{I}}\left[\frac{P_{O}}{P_{I}}q\sqrt{L_{11}L_{22}}X_{2}+J_{1}+L_{11}X_{1}\right].\label{detapo}
\end{equation}

By hypothesis $\left(\frac{\partial\eta}{\partial X_{1}}\right)_{X_{2}}=0$
and, by using the definition of $\eta$, we get: 
\begin{equation}
\eta\left(q\sqrt{L_{11}L_{22}}X_{2}+L_{11}X_{1}-L_{11}X_{1}\right)+J_{1}+L_{11}X_{1}=0.
\end{equation}

Finally, 
\begin{equation}
J_{1}=-\left(\frac{1-\eta}{1+\eta}\right)L_{11}X_{1}.\label{condparflux1me}
\end{equation}
\end{proof}
\begin{corollary}{\large{} $M\eta$}\textbf{\textsl{\large{}-{}-COROLLARY}}
If the degree of coupling between the processes of a non equilibrium
steady--state system is measured by the coefficient $q$ and it is
also operating under the \textbf{$M\eta$} regime, the cross effect
between both generalized forces, denoted by the parameter $x$, is
given by $x=-\frac{q}{1+\sqrt{1-q^{2}}}$. \end{corollary}
\begin{proof}
From the \textbf{$M\eta$--THEOREM}, the constraint $J_{1}=-\left(\frac{1-\eta}{1+\eta}\right)L_{11}X_{1}$
lead us to write the force $X_{1}$ as: 
\begin{equation}
X_{1}=-\frac{q\left(1+\eta\right)}{2}\left(\sqrt{\frac{L_{22}}{L_{11}}}X_{2}\right).\label{x1eta}
\end{equation}

By using the definition for $x$, as well as Eq.\ref{eq:potefic}c,
we have 
\begin{equation}
x=-\frac{q}{1+\sqrt{1-q^{2}}}.\label{eq:xme}
\end{equation}
\end{proof}

\subsubsection{Energy conversion theorems for trade-off functions}

\begin{theorem}{\large{} $MEF$}\textbf{\textsl{\large{}-{}-THEOREM}}
When a non equilibrium steady--state system is characterized by two
generalized forces $X_{1}$ (the one associated with driven processes)
and $X_{2}$ (another one associated with driver processes), it is
arrested at a steady state when the driver force is fixed. Then, under
the condition of \textbf{maximum ecological function}, the $J_{1}$
flux is equal to $-\frac{1}{3}L_{11}X_{1}$. \end{theorem}
\begin{proof}
Let us consider the mathematical expression for the so-called ecological
function {[}Eq. (\ref{eq:compfunc}a){]}. The partial derivative with
respect to $X_{1}$ assuming $X_{2}$ fixed, is: 
\begin{equation}
\left(\frac{\partial E_{F}}{\partial X_{1}}\right)_{X_{2}}=-T\left(J_{1}+L_{11}X_{1}+2J_{1}\right).\label{def}
\end{equation}

By hypothesis $\left(\frac{\partial E_{F}}{\partial X_{1}}\right)_{X2}=0$,
we get 
\begin{equation}
J_{1}=-\frac{1}{3}L_{11}X_{1}.\label{condparflux1mef}
\end{equation}
\end{proof}
\begin{corollary}{\large{} $MEF$}\textbf{\textsl{\large{}-{}-COROLLARY}}
If the degree of coupling between the processes of a non equilibrium
steady--state system is measured by the coefficient $q$ and it is
also operating under the \textbf{MEF} regime, the cross effect between
both generalized forces, denoted by the parameter $x$, is given by
$x=-\frac{3q}{4}$. \end{corollary}
\begin{proof}
From the \textbf{$MEF$--THEOREM}, the constraint $J_{1}=-\frac{1}{3}L_{11}X_{1}$
lead us to write the force $X_{1}$ as: 
\begin{equation}
X_{1}=-\frac{3q}{4}\left(\frac{L_{22}}{L_{11}}X_{2}\right).\label{x1ef}
\end{equation}
By using the definition for $x$, we have 
\begin{equation}
x=-\frac{3q}{4}.\label{eq:xmef}
\end{equation}
\end{proof}
\begin{theorem} {\large{}$M\Omega$}\textbf{\textsl{\large{}-{}-THEOREM}}
When a non equilibrium steady--state system is characterized by two
generalized forces $X_{1}$ (the one associated with driven processes)
and $X_{2}$ (another one associated with driver processes), it is
arrested at a steady state when the driver force is fixed. Then, under
the condition of \textbf{maximum omega function}, the $J_{1}$ flux
is equal to $-\left(\frac{2-\eta_{M}}{2+\eta_{M}}\right)L_{11}X_{1}$.
\end{theorem}
\begin{proof}
Let us take the mathematical expression for the so-called omega function
{[}Eq. (\ref{eq:compfunc}b){]}. By calculating the partial derivative
of $\Omega$ with respect to $X_{1}$ and by assuming $X_{2}$ fixed,
\begin{equation}
\left(\frac{\partial\Omega}{\partial X_{1}}\right)_{X_{2}}=-T\left[\left(2+\eta_{M}\right)J_{1}+\left(2-\eta_{M}\right)L_{11}X_{1}\right].\label{do}
\end{equation}

By hypothesis $\left(\frac{\partial\Omega}{\partial X_{1}}\right)_{X_{2}}=0$.
Then, we obtain 
\begin{equation}
J_{1}=-\left(\frac{2-\eta_{M}}{2+\eta_{M}}\right)L_{11}X_{1}.\label{eq:fluxcondmof}
\end{equation}
\end{proof}
\begin{corollary}{\large{} $M\Omega$}\textbf{\textsl{\large{}-{}-COROLLARY}}
If the degree of coupling between the processes of a non equilibrium
steady--state system is measured by the coefficient $q$ and it is
also operating under the \textbf{$M\Omega$} regime, the cross effect
between both generalized forces, denoted by the parameter $x$, is
given by $x=-\frac{q\left(4-q^{2}+4\sqrt{1-q^{2}}\right)}{4\left(1+\sqrt{1-q^{2}}\right)^{2}}$.
\end{corollary}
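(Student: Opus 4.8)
The plan is to follow exactly the template set by the previous corollaries: combine the flux–force law with the flux condition supplied by the \textbf{$M\Omega$--THEOREM}, solve for $X_1$ as a multiple of $X_2$, and then rewrite that multiple in terms of $x$. First I would write $J_1=L_{11}X_1+q\sqrt{L_{11}L_{22}}\,X_2$ and impose $J_1=-\left(\tfrac{2-\eta_M}{2+\eta_M}\right)L_{11}X_1$. Collecting the $L_{11}X_1$ terms gives $q\sqrt{L_{11}L_{22}}\,X_2=-L_{11}X_1\bigl(1+\tfrac{2-\eta_M}{2+\eta_M}\bigr)=-L_{11}X_1\cdot\tfrac{4}{2+\eta_M}$, hence $X_1=-\tfrac{q(2+\eta_M)}{4}\sqrt{\tfrac{L_{22}}{L_{11}}}\,X_2$. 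Using the definition $x=\sqrt{L_{11}/L_{22}}\,(X_1/X_2)$ then yields the compact intermediate identity $x=-\tfrac{q(2+\eta_M)}{4}$.

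The remaining work is to evaluate $\eta_M$ explicitly. By definition $\eta_M$ is Eq.~(\ref{eq:potefic}c) evaluated at the $M\eta$ optimum $x^{M\eta}=-\tfrac{q}{1+\sqrt{1-q^2}}$ established in the \textbf{$M\eta$--COROLLARY}. I would introduce the abbreviation $s=\sqrt{1-q^2}$, so that $x^{M\eta}=-\tfrac{q}{1+s}$ and, crucially, $q^2=1-s^2$. A short computation gives $x^{M\eta}+q=\tfrac{qs}{1+s}$, so $(x^{M\eta}+q)x^{M\eta}=-\tfrac{q^2 s}{(1+s)^2}$, while $qx^{M\eta}+1=\tfrac{1+s-q^2}{1+s}=\tfrac{s+s^2}{1+s}=s$ after using $q^2=1-s^2$. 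Therefore $\eta_M=-\dfrac{(x^{M\eta}+q)x^{M\eta}}{qx^{M\eta}+1}=\dfrac{q^2}{(1+s)^2}$.

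Finally I would substitute this back: $2+\eta_M=\dfrac{2(1+s)^2+q^2}{(1+s)^2}$, and expanding the numerator with $s^2=1-q^2$ gives $2(1+s)^2+q^2=2+4s+2s^2+q^2=4+4s-q^2$, so $2+\eta_M=\dfrac{4-q^2+4\sqrt{1-q^2}}{(1+\sqrt{1-q^2})^2}$. Inserting into $x=-\tfrac{q(2+\eta_M)}{4}$ produces exactly $x=-\dfrac{q\left(4-q^2+4\sqrt{1-q^2}\right)}{4\left(1+\sqrt{1-q^2}\right)^2}$, which is the claimed value and matches Eq.~(\ref{eq:opXvp2}b). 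There is no real obstacle here beyond bookkeeping; the only step that requires a little care is the simplification of $\eta_M$, where the substitution $q^2=1-s^2$ must be applied consistently so that the denominator $qx^{M\eta}+1$ collapses to the single term $s$ — this is what makes the rest of the algebra rational and clean.
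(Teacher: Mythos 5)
Your proposal is correct and follows essentially the same route as the paper: combine the phenomenological relation with the flux condition from the \textbf{$M\Omega$--THEOREM} to obtain $X_1=-\tfrac{q(2+\eta_M)}{4}\sqrt{L_{22}/L_{11}}\,X_2$, hence $x=-\tfrac{q(2+\eta_M)}{4}$, and then substitute $\eta_M=\eta(x^{M\eta},q)=q^2/(1+\sqrt{1-q^2})^2$. The only difference is that you spell out the algebraic simplification of $\eta_M$ and of $2+\eta_M$, which the paper leaves implicit; your intermediate identities are all correct.
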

\begin{proof}
From the \textbf{$M\Omega$--THEOREM}, the constraint $J_{1}=-\left(\frac{2-\eta_{M}}{2+\eta_{M}}\right)L_{11}X_{1}$
lead us to write the force $X_{1}$ as: 
\begin{equation}
X_{1}=-\frac{q\left(2+\eta_{M}\right)}{4}\left(\sqrt{\frac{L_{22}}{L_{11}}}X_{2}\right),\label{x1o}
\end{equation}
by using the definition for $x$ and $\eta_{M}=\eta\left(x^{M\eta},q\right)$,
we obtain

\begin{equation}
x=-\frac{q\left(4-q^{2}+4\sqrt{1-q^{2}}\right)}{4\left(1+\sqrt{1-q^{2}}\right)^{2}}.\label{eq:xmo}
\end{equation}
\end{proof}
\begin{theorem} {\large{}$MP_{\eta}$}\textbf{\textsl{\large{}-{}-THEOREM}}
When a non equilibrium steady--state system is characterized by two
generalized forces $X_{1}$ (the one associated with driven processes)
and $X_{2}$ (another one associated with driver processes), it is
arrested at a steady state when the driver force is fixed. Then, under
the condition of \textbf{maximum efficient power}, the $J_{1}$ flux
is equal to $-\left(\frac{2-\eta}{2+\eta}\right)L_{11}X_{1}$. \end{theorem}
\begin{proof}
Let us consider the mathematical expression for the so-called efficient
power {[}Eq. (\ref{eq:compfunc}c){]}. By calculating the partial
derivative of $P_{\eta}$ with respect to $X_{1}$ assuming $X_{2}$
fixed, 
\begin{equation}
\left(\frac{\partial P_{\eta}}{\partial X_{1}}\right)_{X_{2}}=-\frac{TP_{O}}{P_{I}}\left[\frac{P_{O}}{P_{I}}L_{12}X_{2}+2\left(J_{1}+L_{11}X_{1}\right)\right].\label{dpe}
\end{equation}

By hypothesis $\left(\frac{\partial P_{\eta}}{\partial X_{1}}\right)_{X_{2}}=0$
and by using the definition of $\eta$, we have 
\begin{equation}
2J_{1}+2L_{11}X_{1}+\eta\left(L_{12}X_{2}+L_{11}X_{1}-L_{11}X_{1}\right)=0.\label{x1pe}
\end{equation}

Finally, 
\begin{equation}
J_{1}=-\frac{2-\eta}{2+\eta}L_{11}X_{1}.\label{condparflux1mpe}
\end{equation}
\end{proof}
\begin{corollary} {\large{}$MP_{\eta}$}\textbf{\textsl{\large{}--COROLLARY}}
If the degree of coupling between the processes of a non equilibrium
steady--state system is measured by the coefficient $q$ and it is
also operating under the \textbf{MP$\eta$} regime, the cross effect
between both generalized forces, denoted by the parameter $x$, is
given by $x=-\frac{4+q^{2}-\sqrt{q^{4}-16q^{2}+16}}{6q}$. \end{corollary}
\begin{proof}
From the \textbf{MP$\eta$--THEOREM}, the constraint $J_{1}=-\frac{2-\eta}{2+\eta}L_{11}X_{1}$
lead us to write the force $X_{1}$ as: 
\begin{equation}
X_{1}=-\frac{q\left(2+\eta\right)}{4}\left(\sqrt{\frac{L_{22}}{L_{11}}}X_{2}\right).\label{x1mpe}
\end{equation}

By using the definitions for $x$ and $\eta$, we get 
\begin{equation}
x=-\frac{4+q^{2}-\sqrt{q^{4}-16q^{2}+16}}{6q}.\label{eq:xmep}
\end{equation}
\end{proof}
As has been pointed out in some works \cite{Castilloetal97,Hoover02},
a boundary condition that at the same time is linked to a particular
dynamic performance mode has several stable steady solutions. The
energetic optimization criteria presented in Section 2 can be equivalent
when the degree of coupling between the flows adopt particular values.
For instance, the steady state associated with the minimum dissipation
function condition is identical to the other extreme criteria, in
Table \ref{tab:mdfboundcond} the values of $q$ are displayed. Other
constraints can also be established using the previous corollaries
and evaluating $q$ in the characteristic functions presented in Section
2.

\begin{table}
\caption{\label{tab:mdfboundcond}Physical constraints on linear energy converters
so that two or more steady states coincide with the minimum dissipation
function condition.}

\centering{}%
\begin{tabular}{c|c}
\hline 
Boundary Conditions  & $q$\tabularnewline
\hline 
$\Phi_{mdf}\left(q\right)=\Phi_{MPO}\left(q\right)$  & $0$\tabularnewline
\hline 
$\Phi_{mdf}\left(q\right)=\Phi_{M\eta}\left(q\right)$  & $0$ and $1$\tabularnewline
\hline 
$\Phi_{mdf}\left(q\right)=\Phi_{MEF}\left(q\right)$  & $0$\tabularnewline
\hline 
$\Phi_{mdf}\left(q\right)=\Phi_{M\Omega}\left(q\right)$  & $0$ \tabularnewline
\hline 
$\Phi_{mdf}\left(q\right)=\Phi_{MP\eta}\left(q\right)$  & $0$ \tabularnewline
\hline 
\end{tabular}
\end{table}

All of the above energetic trade--offs fulfill the condition $\Phi>0$,
since the non-equilibrium steady states associated with them, only
appear when external thermodynamic forces are linked under an extreme
condition. Thus, the operation modes involve the maintenance of steady
states for coupled processes that waste free energy at different rates.

\subsection{Temporal evolution of the characteristic functions and stability
of their steady states}

The analysis of external perturbations on different types of thermal
cycle models has been topic of interest \cite{Santillan01,Guzman05,Huang08,Wouagfack17}.
Since thermal engines operate with many cycles per unit time, the
effect of noisy perturbations has been able to help to have a well-design
systems guaranteeing stability in their operating regimes (steady-state
regimes) \cite{Gonzalez20,Lee20,Valencia21}. In several analysis
on LIT \cite{KondePrigo98,DeGroot62,Glansdorff54,Glansdorff71,Nicolis77},
it has been shown that spontaneous fluctuations on a particular steady
state drive the system back to its condition of \textit{minimum entropy
production. }In this subsection we show that the stability of linear
steady--states can be also attained for the constrained cases.

\subsubsection{Dissipation function temporal evolution}

For the case of a $(2\times2)$--isothermal energy converters, let
us consider the dissipation function $\Phi=\Phi(X_{i},J_{i})$ in
a non-equilibrium steady state. As $\Phi=T\int_{V}\left(J_{1}X_{1}+J_{2}X_{2}\right)\,dV$,
the time variation of $\Phi$ can be written as: 
\begin{eqnarray}
\frac{d\Phi}{dt} & = & \int_{V}\left(J_{1}\frac{dX_{1}}{dt}+J_{2}\frac{dX_{2}}{dt}\right)\,dV+\nonumber \\
 &  & \int_{V}\left(X_{1}\frac{dJ_{1}}{dt}+X_{2}\frac{dJ_{2}}{dt}\right)\,dV\nonumber \\
 & \equiv & \frac{d_{X}\Phi}{dt}+\frac{d_{J}\Phi}{dt},
\end{eqnarray}
where $\nicefrac{d_{X}\Phi}{dt}$ is the time variation of thermodynamic
forces associated with spontaneous and non-spontaneous fluxes, and
$\nicefrac{d_{J}\Phi}{dt}$ is the temporal change of these conjugated
fluxes. In the linear regime a general property has been stated \cite{KondePrigo98,DeGroot62,Glansdorff54}:
\begin{equation}
\frac{d_{X}\Phi}{dt}=\frac{d_{J}\Phi}{dt}.\label{eq:cond1}
\end{equation}

If we assume hereinafter homogeneity and unitary volume, the time
derivative of each $X_{i}$ can be written by noting that 
\begin{equation}
\frac{dX_{i}}{dt}=\sum_{k}\left(\frac{\partial X_{i}}{\partial a_{k}}\right)\frac{da_{k}}{dt}.\label{eq:cond2}
\end{equation}
Thus, we find 
\begin{equation}
\frac{d\Phi}{dt}=2T\left[J_{1}\sum_{k=1}^{2}\left(\frac{\partial X_{1}}{\partial a_{k}}\right)J_{k}+J_{2}\sum_{k=1}^{2}\left(\frac{\partial X_{2}}{\partial a_{k}}\right)J_{k}\right],
\end{equation}
where Onsager reciprocal relations and the definition of $\nicefrac{da_{k}}{dt}\equiv J_{k}$
have been considered. On the other side, since the LIT has been constructed
using the so--called Local Equilibrium Hypothesis \cite{Onsager31I,ColinGolds03},
in a neighborhood of the equilibrium state, $\nicefrac{\partial X_{i}}{\partial a_{k}}$
would retain its negative definiteness for the spontaneous terms,
while non-spontaneous ones are positive. In addition, because of $X_{1}<0$
and $X_{2}>0$. Hence, in this neighborhood we have 
\begin{equation}
\frac{d\Phi}{dt}<0.\label{eq:mdfstab}
\end{equation}

This condition ensures the stability of the steady state associated
with the dissipation function, since in general the magnitude of the
direct effects are greater than the cross effects.

\subsubsection{Power output temporal evolution}

In the same way, given an arbitrary volume, let us now consider the
power output for a $(2\times2)$-isothermal energy converter $P_{O}=P_{O}(X_{i},J_{i})$.
Then, the time variation of $P_{O}$ is 
\begin{equation}
\frac{dP_{O}}{dt}\equiv-\left(\frac{d_{X}P_{O}}{dt}+\frac{d_{J}P_{O}}{dt}\right),\label{eq:tempotO}
\end{equation}
under the same mathematical assumptions of \cite{KondePrigo98,DeGroot62,Glansdorff54},
\begin{equation}
\frac{d_{J}P_{O}}{dt}<\frac{d_{X}P_{O}}{dt},\label{eq:cond3}
\end{equation}
and by using the condition given by Eq. (\ref{eq:cond2}). We have:
\begin{eqnarray}
\frac{dP_{O}}{dt} & = & -T\left[\left(J_{1}+L_{11}X_{1}\right)\sum_{k=1}^{2}\left(\frac{\partial X_{1}}{\partial a_{k}}\right)J_{k}+\right.\nonumber \\
 &  & \left.\left(J_{2}-L_{22}X_{2}\right)\sum_{k=1}^{2}\left(\frac{\partial X_{2}}{\partial a_{k}}\right)J_{k}\right].\label{eq:derMPO}
\end{eqnarray}
Therefore, in the neighborhood of the equilibrium state 
\begin{equation}
\frac{dP_{O}}{dt}<0,\label{eq:MPOstab}
\end{equation}
since $|L_{11}X_{1}|<|L_{22}X_{2}|$. This new constrain guarantees
the stability of a steady state linked to the power output regime.

\subsubsection{Efficiency temporal evolution}

For the case of the efficiency in this type of linear energy converters,
$\eta=\eta(X_{i},J_{i})$. The temporal evolution of $\eta$ can be
written as 
\begin{equation}
\frac{d\eta}{dt}\equiv-\frac{1}{P_{I}^{2}}\left[P_{I}\left(\frac{d_{X}P_{O}}{dt}+\frac{d_{J}P_{O}}{dt}\right)-P_{O}\left(\frac{d_{X}P_{I}}{dt}+\frac{d_{J}P_{I}}{dt}\right)\right];\label{eq:stabMet}
\end{equation}
analogously, using the same mathematical constraints, \begin{subequations}
\label{eq:cond4} 
\begin{eqnarray}
|P_{O}| & < & |P_{I}|\\
\frac{d_{X}P_{I}}{dt} & < & \frac{d_{J}P_{I}}{dt},
\end{eqnarray}
\end{subequations} and by considering Eq. (\ref{eq:cond2}), 
\begin{eqnarray}
\frac{d\eta}{dt} & = & -\frac{1}{P_{I}^{2}}\left[P_{I}\left(\frac{dP_{O}}{dt}\right)-\right.\nonumber \\
 &  & \left.P_{O}\left\lbrace \left(J_{1}-L_{11}X_{1}\right)\sum_{k=1}^{2}\left(\frac{\partial X_{1}}{\partial a_{k}}\right)J_{k}\right.\right.+\nonumber \\
 &  & \left.\left.\left(J_{2}+L_{22}X_{2}\right)\sum_{k=1}^{2}\left(\frac{\partial X_{2}}{\partial a_{k}}\right)J_{k}\right\rbrace \right].\label{eq:derMet}
\end{eqnarray}
Hence, in the neighborhood of the equilibrium state: 
\begin{equation}
\frac{d\eta}{dt}<0,\label{eq:Metstab}
\end{equation}
where $|L_{11}X_{1}|<|L_{22}X_{2}|$ must be fulfilled. Once again,
this new constraint describes the stability of a steady state related
to this energetic function.

\subsubsection{Temporal evolution of trade-off functions}

Finally, let us take the mathematical expressions of the three objective
functions: ecological function, omega function and efficient power;
in order to analyze the effect of the fluctuations around their respective
stable state.

\paragraph{Temporal evolution of $E_{F}$.}

For this $(2\times2)$-linear system, $E_{F}=E_{F}(X_{i},J_{i})$
can be studied when it is disturbed, 
\begin{equation}
\frac{dE_{F}}{dt}=-T\left[2\left(\frac{d_{X}P_{O}}{dt}+\frac{d_{J}P_{O}}{dt}\right)+\frac{d_{X}P_{I}}{dt}+\frac{d_{J}P_{I}}{dt}\right].\label{eq:stabeco}
\end{equation}
From Eqs. (\ref{eq:cond1}) and (\ref{eq:cond3}), we can state that
\begin{equation}
\frac{d_{J}E_{F}}{dt}<\frac{d_{X}E_{F}}{dt},\label{eq:cond5}
\end{equation}
and by using the definition of Eq. (\ref{eq:cond2}) as well as Eqs.
(\ref{eq:mdfstab}) and (\ref{eq:derMPO}), 
\begin{eqnarray}
\frac{dE_{F}}{dt} & = & -T\left[\left(3J_{1}+L_{11}X_{1}\right)\sum_{k=1}^{2}\left(\frac{\partial X_{1}}{\partial a_{k}}\right)J_{k}\right.+\nonumber \\
 &  & \left.\left(3J_{2}-L_{22}X_{2}\right)\sum_{k=1}^{2}\left(\frac{\partial X_{2}}{\partial a_{k}}\right)J_{k}\right].\label{eq:derMeco}
\end{eqnarray}
Then, in the vicinity of the equilibrium state, 
\begin{equation}
\frac{dE_{F}}{dt}<0.\label{eq:Mecsatb}
\end{equation}

\paragraph{Temporal evolution of $\Omega$.}

Now, let us consider the omega function $\Omega=\Omega(X_{i},J_{i})$,
its temporal variation is: 
\begin{equation}
\frac{d\Omega}{dt}=-T\left[2\left(\frac{d_{X}P_{O}}{dt}+\frac{d_{J}P_{O}}{dt}\right)+\eta_{M}\left(\frac{d_{X}P_{I}}{dt}+\frac{d_{J}P_{I}}{dt}\right)\right].\label{eq:stabome}
\end{equation}
Eqs. (\ref{eq:cond1}) and (\ref{eq:cond3}) allow us to establish
that 
\begin{equation}
\frac{d_{J}\Omega}{dt}<\frac{d_{X}\Omega}{dt},\label{eq:cond6}
\end{equation}
then, taking the expressions given by Eqs. (\ref{eq:cond2}) and (\ref{eq:derMPO}),
\begin{eqnarray}
\frac{d\dot{\Omega}}{dt} & = & -T\left[\left\lbrace \left(2+\eta_{M}\right)J_{1}+\left(2-\eta_{M}\right)L_{11}X_{1}\right\rbrace \sum_{k=1}^{2}\left(\frac{\partial X_{1}}{\partial a_{k}}\right)J_{k}+\right.\nonumber \\
 &  & \left.\left\lbrace \left(2+\eta_{M}\right)J_{2}-\left(2-\eta_{M}\right)L_{22}X_{2}\right\rbrace \sum_{k=1}^{2}\left(\frac{\partial X_{2}}{\partial a_{k}}\right)J_{k}\right].\label{eq:Momstab}
\end{eqnarray}
Once again, in the zone near to the equilibrium state 
\begin{equation}
\frac{d\Omega}{dt}<0.\label{eq:derMome}
\end{equation}

\paragraph{Temporal evolution of $P_{\eta}$.}

Finally, let us use the mathematical expression for the efficient
power $P_{\eta}=P_{\eta}(X_{i},J_{i})$, the temporal evolution of
this objective function is: 
\begin{equation}
\frac{dP_{\eta}}{dt}=\frac{TP_{O}}{P_{I}^{2}}\left[2P_{I}\left(\frac{d_{X}P_{O}}{dt}+\frac{d_{J}P_{O}}{dt}\right)-P_{O}\left(\frac{d_{X}P_{I}}{dt}+\frac{d_{J}P_{I}}{dt}\right)\right].\label{eq:stabmpe}
\end{equation}
Due to the validity of Eqs. (\ref{eq:cond1}) and (\ref{eq:cond3}),
we can state that 
\begin{equation}
\frac{d_{J}P_{\eta}}{dt}<\frac{d_{X}P_{\eta}}{dt},\label{eq:cond7}
\end{equation}
by taking into account the constraints {[}Eqs. (\ref{eq:derMPO})
and (\ref{eq:cond4}){]}, in the neighborhood of the equilibrium state:
\begin{equation}
\frac{dP_{\eta}}{dt}<0.\label{eq:dermpe}
\end{equation}

Inequalities (\ref{eq:derMeco}), (\ref{eq:derMome}) and (\ref{eq:dermpe})
express new constraints that describe the stability of three steady
states associated with three different objective functions.

The above-mentioned conditions ensure the stability of isothermal
energy converters, whose energetic processes occur in the linear regime
when small perturbations are considered. As the so-called characteristic
functions are positive and also their respective temporal variation
are negative, constitute the so-called Lyapunov conditions \cite{Glansdorff71}.
They guarantee the stability of any dynamic state, i.e, those states
are considered simple attractors when the systems experience energetic
fluctuations.

\section{An application of energy theorems: Electric circuit with resistors
elements and two coupled fluxes}

In this section, we apply the previous results, associated with steady
states without minimum entropy production, to study the energetics
for a $\left(2\times2\right)$--isothermal linear energy converter
consisting of an electric circuit of two meshes with passive elements
(resistors) and powered by two DC voltage sources (see Fig. \ref{fig:schemcircR}).
Through Kirchhoff's equations that describe the energy conservation
between the meshes, within the context of generalized flows and forces,
we found the so-called cross effect $\left(L_{12}=L_{21}\right)$
is fulfilled. Then, from the parameters related to its design ($q$)
and its operation modes ($x$), we can characterize all the optimal
operating criteria that correspond to particular stationary states.

\begin{figure}
\begin{centering}
\includegraphics[scale=0.75]{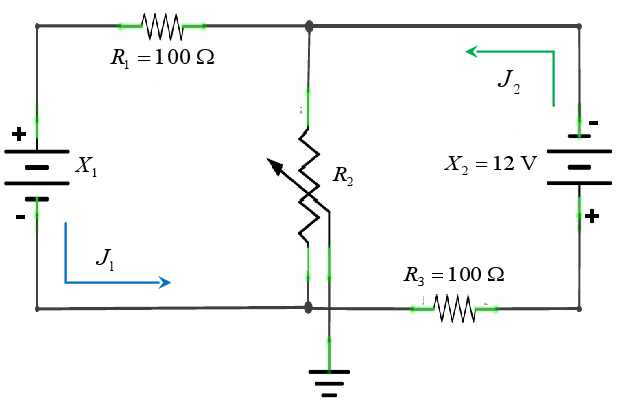} 
\par\end{centering}
\caption{\label{fig:schemcircR}Two-mesh resistive electric circuit, modeled
as a $(2\times2)$-simple isothermal energy converter. In this case,
the voltage source $X_{1}$ is associated with the non-spontaneous
flux $J_{1}$, while the fixed voltage source $X_{2}$ is related
to the spontaneous flux $J_{2}$.}
\end{figure}

\subsection{Dynamic equations of the resistive circuit}

Let us consider the electric currents involved are time independent,
with $J_{1}$ the driven flux that flows through the mesh 1 and $J_{2}$
the driver flux that gets about through the mesh 2. In addition, $X_{1}$
and $X_{2}$ are the DC voltage sources that promote the fluxes. As
the circuit phenomenological equations can be associated with the
Kirchhoff laws, then for the mesh 1 (left side in figure \ref{fig:schemcircR})
we have, 
\begin{equation}
X_{1}=\left(R_{1}+R_{2}\right)J_{1}-R_{2}J_{2},\label{eq:kirchh1}
\end{equation}
while for the mesh 2 (right side of Fig. \ref{fig:schemcircR}), 
\begin{equation}
X_{2}=\left(R_{2}+R_{3}\right)J_{2}-R_{2}J_{1}.\label{eq:kirchh2}
\end{equation}

In order to study the phenomenological equation of this system in
the Onsager context, we must reverse Eqs. (\ref{eq:kirchh1}) and
(\ref{eq:kirchh2}), i.e, writing $\left(J_{1},J_{2}\right)$ in terms
of $\left(X_{1},X_{2}\right)$. Hence, 
\begin{equation}
\left(\begin{array}{c}
J_{1}\\
J_{2}
\end{array}\right)=\left(\begin{array}{cc}
\frac{R_{2}+R_{3}}{\Delta} & \frac{R_{2}}{\Delta}\\
\frac{R_{2}}{\Delta} & \frac{R_{1}+R_{2}}{\Delta}
\end{array}\right)\left(\begin{array}{c}
X_{1}^{\prime}\\
X_{2}^{\prime}
\end{array}\right),\label{eq:kirchhinver}
\end{equation}
where $\Delta=R_{1}\left(R_{2}+R_{3}\right)+R_{2}R_{3}$ and $X_{1,2}^{\prime}=\nicefrac{X_{1,2}}{T}$.
As it can be noticed the equation system {[}Eq. (\ref{eq:kirchhinver}){]}
is by itself the generalized Onsager equations whose equality in the
crossed coefficients represents the contribution of each force $X_{i}$
with the flux $J_{k\neq i}$. In this case, the parameter $q$ is
related to the intrinsic features of the circuit, and can be rewritten
in terms of the resistive elements nominal values, as follows: 
\begin{equation}
q=\frac{L_{12}}{\sqrt{L_{11}L_{22}}}=\frac{R_{2}}{\sqrt{\Delta+R_{2}^{2}}},\label{eq:couplwitrs}
\end{equation}
and the force ratio $x$ which has all the extra thermodynamic information
of the system, it takes the mathematical expression: 
\begin{equation}
x=\sqrt{\frac{L_{11}}{L_{22}}}\frac{X_{1}}{X_{2}}=\sqrt{\frac{R_{2}+R_{3}}{R_{1}+R_{2}}}\frac{X_{1}}{X_{2}}.\label{eq:forcratwitrs}
\end{equation}

In the following, we will study every optimization criteria in terms
of arbitrary values that the resistors can adopt, with the aim not
only of characterizing the steady state related to an optimal coupling
in the circuit stable configuration, but also of ensuring the system
adapts to the selected operating regime.

\subsection{Steady state of minimum dissipation function ($mdf$)}

From the \textbf{$mdf$--THEOREM} it follows that $x_{mdf}$ can
be rewritten by replacing Eq. (\ref{eq:couplwitrs}) in Eq. (\ref{eq:forcratwitrs}),
as 
\begin{equation}
x_{mdf}=-\frac{R_{2}}{\sqrt{\Delta+R_{2}^{2}}},\label{eq:xmdfwitrs}
\end{equation}
then, the force $X_{1}$ is rewritten in this regime as: 
\begin{equation}
X_{1}^{mdf}=-\frac{R_{2}}{R_{2}+R_{3}}X_{2}.\label{eq:x1drivmdf}
\end{equation}

Thus, the energetics under $mdf$ operation regime is, \begin{subequations}
\label{eq:mdfregrs} 
\begin{eqnarray}
\Phi_{mdf} & = & \frac{TX_{2}^{2}}{R_{2}+R_{3}}\\
P_{mdf} & = & 0\\
\eta_{mdf} & = & 0.
\end{eqnarray}
\end{subequations}

\subsection{Steady state of maximum power output ($MPO$)}

By applying the \textbf{$MPO$--THEOREM}, we find that $x_{MPO}$
can be written as: 
\begin{equation}
x_{MPO}=-\frac{R_{2}}{2\sqrt{\Delta+R_{2}^{2}}},\label{eq:xmpowitrs}
\end{equation}
thereby, the force $X_{1}$ acquires the form 
\begin{equation}
X_{1}^{MPO}=-\frac{R_{2}}{2\left(R_{2}+R_{3}\right)}X_{2}.\label{eq:x1drivmpo}
\end{equation}

The energetics beneath $MPO$ operation regime is: \begin{subequations}
\label{eq:mporegrs} 
\begin{eqnarray}
\Phi_{MPO} & = & \left\{ \frac{4\Delta+R_{2}^{2}}{4\left(R_{2}+R_{3}\right)\Delta}\right\} TX_{2}^{2}\\
P_{MPO} & = & \left\{ \frac{R_{2}^{2}}{4\left(R_{2}+R_{3}\right)\Delta}\right\} TX_{2}^{2}\\
\eta_{MPO} & = & \frac{R_{2}^{2}}{4\Delta+2R_{2}^{2}}.
\end{eqnarray}
\end{subequations}

\subsection{Steady state of maximum efficiency ($M\eta$)}

By taking the \textbf{$M\eta$--THEOREM}, we have $x_{M\eta}$ in
terms of the resistors values, 
\begin{equation}
x_{M\eta}=-\frac{R_{2}}{\sqrt{\Delta+R_{2}^{2}}+\sqrt{\Delta}},\label{eq:xmeficwitrs}
\end{equation}
the force associated with driven flux $X_{1}$ takes the form:

\begin{equation}
X_{1}^{M\eta}=-\frac{R_{2}}{\left(R_{2}+R_{3}\right)+\sqrt{\frac{\Delta}{R_{1}+R_{2}}}}X_{2}.\label{eq:x1drivmefic}
\end{equation}

Therefore, the energetics that results in the $M\eta$ regime, \begin{subequations}
\label{eq:meficregrs} 
\begin{eqnarray}
\Phi_{M\eta} & = & \frac{2TX_{2}^{2}}{\left(R_{2}+R_{3}\right)\left(1+\sqrt{\Gamma}\right)}\\
P_{M\eta} & = & \frac{TX_{2}^{2}R_{2}^{2}}{\sqrt{\left(R_{2}+R_{3}\right)\left(\Delta+R_{2}^{2}\right)\Delta}\left(1+\sqrt{\Gamma}\right)^{2}}\\
\eta_{M\eta} & = & \frac{R_{2}^{2}\sqrt{C}}{2\left(1+\sqrt{\Gamma}\right)\left(R_{1}+R_{3}\right)+R_{2}^{2}\sqrt{\Gamma}},
\end{eqnarray}
\end{subequations} with $\sqrt{\Gamma}=\sqrt{\frac{\Delta}{\Delta+R_{2}^{2}}}$.

\subsection{Steady state of maximum ecological function ($MEF$)}

From the \textbf{$MEF$--THEOREM}, it can be shown that $x_{MEF}$
in terms of the circuit elements {[}Eq. (\ref{eq:couplwitrs}){]}
is, 
\begin{equation}
x_{MEF}=-\frac{3R_{2}}{4\sqrt{\Delta+R_{2}^{2}}},\label{eq:xmefwitrs}
\end{equation}
then, the force $X_{1}$ associated with the non-spontaneous flux,
\begin{equation}
X_{1}^{MEF}=-\frac{3R_{2}}{4\left(R_{2}+R_{3}\right)}X_{2}.\label{eq:x1drivmef}
\end{equation}

In such a way, the energetics corresponding to the $MEF$ regime remains,
\begin{subequations} \label{eq:mefregrs} 
\begin{eqnarray}
\Phi_{MEF} & = & \left\{ \frac{16\Delta+R_{2}^{2}}{16\left(R_{2}+R_{3}\right)\Delta}\right\} TX_{2}^{2}\\
P_{MEF} & = & \left\{ \frac{3R_{2}^{2}}{16\left(R_{2}+R_{3}\right)\Delta}\right\} TX_{2}^{2}\\
\eta_{MEF} & = & \frac{3R_{2}^{2}}{4\Delta+R_{2}^{2}}.
\end{eqnarray}
\end{subequations}

\subsection{Steady state of maximum omega function ($M\Omega$)}

By using the \textbf{$M\Omega$--THEOREM}, it is derived that $x_{M\Omega}$
can also be denoted as 
\begin{equation}
x_{M\Omega}=-\frac{R_{2}\left(3+\Gamma+4\sqrt{\Gamma}\right)}{4\sqrt{\Delta+R_{2}^{2}}\left(1+\sqrt{\Gamma}\right)^{2}},\label{eq:xmomewitrs}
\end{equation}
in order to characterize the force $X_{1}$, 
\begin{equation}
X_{1}^{M\Omega}=-\frac{R_{2}\left(3+\Gamma+4\sqrt{\Gamma}\right)}{4\left(R_{2}+R_{3}\right)\left(1+\sqrt{\Gamma}\right)^{2}}X_{2}\label{eq:x1drivmomf}
\end{equation}

Thus, the energetics is evaluated in the $M\Omega$ regime: \begin{widetext}
\begin{subequations} \label{eq:momefregrs} 
\begin{eqnarray}
\Phi_{M\Omega} & = & \frac{\left(R_{1}+R_{2}\right)TX_{2}^{2}}{\Delta}\left\{ \sqrt{\Gamma}+\left(1-\Gamma\right)\left[\frac{\left(3+\Gamma+4\sqrt{\Gamma}\right)^{2}}{16\left(1+\sqrt{\Gamma}\right)^{4}}-\frac{1}{2}\right]\right\} \\
P_{M\Omega} & = & \frac{TX_{2}^{2}R_{2}^{2}\left(3+\Gamma+4\sqrt{\Gamma}\right)\left(1+3\Gamma+4\sqrt{\Gamma}\right)}{16\left(R_{2}+R_{3}\right)\Delta\left(1+\sqrt{\Gamma}\right)^{4}}\\
\eta_{M\Omega} & = & \frac{R_{2}^{2}\left(3+\Gamma+4\sqrt{\Gamma}\right)\left(1+3\Gamma+4\sqrt{\Gamma}\right)}{8\left(\Delta+R_{2}^{2}\right)\left(1+\sqrt{\Gamma}\right)^{4}\left[1+\Gamma-\sqrt{\Gamma}\right]}.
\end{eqnarray}
\end{subequations} \end{widetext}

\subsection{Steady state of maximum efficient power ($MP\eta$)}

Finally, by means of the \textbf{MP$\eta$--THEOREM}, the forces
ratio $x_{MP\eta}$ can be rewritten by using Eq. (\ref{eq:couplwitrs}),
\begin{equation}
x_{MP\eta}=-\frac{5-\Gamma-\sqrt{16\Gamma+\left(1-\Gamma\right)^{2}}}{6q},\label{eq:xmepwitrs}
\end{equation}
from which it is obtained that $X_{1}$ is, 
\begin{equation}
X_{1}^{MP\eta}=-\frac{\left(R_{1}+R_{2}\right)\left[5-\Gamma-\sqrt{16\Gamma+\left(1-\Gamma\right)^{2}}\right]}{6R_{2}}X_{2}.\label{eq:x1drivmep}
\end{equation}

Accordingly, the energetics of the system in the $MP\eta$ regime,
\begin{widetext}
\begin{subequations} \label{eq:mepregrs} 
\begin{eqnarray}
\Phi_{MP\eta} & = & \frac{\left(R_{1}+R_{2}\right)TX_{2}^{2}}{\Delta}\left\{ 1+\frac{\left(\Delta+R_{2}^{2}\right)\left[5-\Gamma-\sqrt{16\Gamma+\left(1-\Gamma\right)^{2}}\right]^{2}}{36R_{2}^{2}}-\frac{\left[5-\Gamma-\sqrt{16\Gamma+\left(1-\Gamma\right)^{2}}\right]}{3}\right\} \\
P_{MP\eta} & = & \frac{TX_{2}^{2}\left(R_{1}+R_{2}\right)\left[5-\Gamma-\sqrt{16\Gamma+\left(1-\Gamma\right)^{2}}\right]\left[6R_{2}^{2}+\Delta\left(\sqrt{16\Gamma+\left(1-\Gamma\right)^{2}}\right)-4\right]}{36R_{2}^{2}\Delta}\\
\eta_{MP\eta} & = & \frac{\left[5-\Gamma-\sqrt{16\Gamma+\left(1-\Gamma\right)^{2}}\right]\left[6R_{2}^{2}+\Delta\left(\sqrt{16\Gamma+\left(1-\Gamma\right)^{2}}\right)-4\right]}{6R_{2}^{2}\left[1+\Gamma+\sqrt{16\Gamma+\left(1-\Gamma\right)^{2}}\right]}.
\end{eqnarray}
\end{subequations} \end{widetext}

\section{Experimental verification of the theoretical energetic hierarchy
for a $\left(2\times2\right)$--electric circuit}

Just as the steady state characterized by Prigogine is associated
with the production of entropy at a minimum and constant rate in a
system, it also represents the only thermodynamic state whose useful
energy to perform work against the surroundings is zero. When we make
a distinction between spontaneous and non-spontaneous processes, we
can introduce one more constraint to the system that is related to
the extra thermodynamic conditions (operation modes). Then, all of
existing steady states bounded between the maximum power output regime
and the minimum dissipation regime (minimum entropy production), can
be physically attainable.

\begin{table}
\caption{\label{tab:Resisval}Resistor $R_{i}$ and phenomenological coefficients
$L_{jk}$ values that correspond to each of the given $q$ values.}

\centering{}\resizebox{0.5\textwidth}{!}{ %
\begin{tabular}{|c|c|c|c|c|c|c|}
\hline 
{$q$}  & {$R_{1}\,\left(\Omega\right)$}  & {$R_{2}\,\left(\Omega\right)$}  & {$R_{3}\,\left(\Omega\right)$}  & {$L_{11}\,\left(\Omega^{-1}\right)$}  & {$L_{12}\,\left(\Omega^{-1}\right)$}  & {$L_{22}\,\left(\Omega^{-1}\right)$}\tabularnewline
\hline 
\hline 
{\small{}{}0.950}  & {\small{}{}100}  & {\small{}{}1900.00}  & {\small{}{}100}  & {\small{}{}0.005128}  & {\small{}{}0.00487}  & {\small{}{}0.005128}\tabularnewline
\hline 
{\small{}{}0.955}  & {\small{}{}100}  & {\small{}{}2122.22}  & {\small{}{}100}  & {\small{}{}0.005115}  & {\small{}{}0.00488}  & {\small{}{}0.005115}\tabularnewline
\hline 
{\small{}{}0.960}  & {\small{}{}100}  & {\small{}{}2400.00}  & {\small{}{}100}  & {\small{}{}0.005102}  & {\small{}{}0.00489}  & {\small{}{}0.005102}\tabularnewline
\hline 
{\small{}{}0.965}  & {\small{}{}100}  & {\small{}{}2757.14}  & {\small{}{}100}  & {\small{}{}0.005008}  & {\small{}{}0.00491}  & {\small{}{}0.005008}\tabularnewline
\hline 
{\small{}{}0.970}  & {\small{}{}100}  & {\small{}{}3233.33}  & {\small{}{}100}  & {\small{}{}0.005007}  & {\small{}{}0.00492}  & {\small{}{}0.005007}\tabularnewline
\hline 
{\small{}{}0.975}  & {\small{}{}100}  & {\small{}{}3900.00}  & {\small{}{}100}  & {\small{}{}0.005006}  & {\small{}{}0.00493}  & {\small{}{}0.005006}\tabularnewline
\hline 
{\small{}{}0.980}  & {\small{}{}100}  & {\small{}{}4900.00}  & {\small{}{}100}  & {\small{}{}0.005005}  & {\small{}{}0.00494}  & {\small{}{}0.005005}\tabularnewline
\hline 
{\small{}{}0.985}  & {\small{}{}100}  & {\small{}{}6566.67}  & {\small{}{}100}  & {\small{}{}0.005003}  & {\small{}{}0.00496}  & {\small{}{}0.005003}\tabularnewline
\hline 
{\small{}{}0.990}  & {\small{}{}100}  & {\small{}{}9900.00}  & {\small{}{}100}  & {\small{}{}0.005002}  & {\small{}{}0.00497}  & {\small{}{}0.005002}\tabularnewline
\hline 
{\small{}{}0.995}  & {\small{}{}100}  & {\small{}{}19900.00}  & {\small{}{}100}  & {\small{}{}0.005001}  & {\small{}{}0.00498}  & {\small{}{}0.005001}\tabularnewline
\hline 
\end{tabular}} 
\end{table}

The validity of the energy conversion theorems as well as their corollaries
enunciated and developed in Section 3, is proved through the electric
model with resistors previously proposed. Taking into account the
operating conditions imposed by the well-known operating regimes ($mdf$,
$MPO$, $M\eta$, $MEF$, $M\Omega$ and $MP\eta$), we measure the
electric current (driven flux $J_{1}$) in mesh 1 of the scheme (see
Fig. \ref{fig:schemcircR}) to reproduce the hierarchical behavior
described in Section 2. That is, we are looking for the energetics
of the system for different values of the resistor $R_{2}$ and, by
transitivity for different values of $q$ that always guarantee: $X_{1}^{mdf}<X_{1}^{M\eta}<X_{1}^{MEF}<X_{1}^{M\Omega}<X_{1}^{MP\eta}<X_{1}^{MPO}$.
\begin{figure*}
\begin{centering}
\includegraphics[scale=0.73]{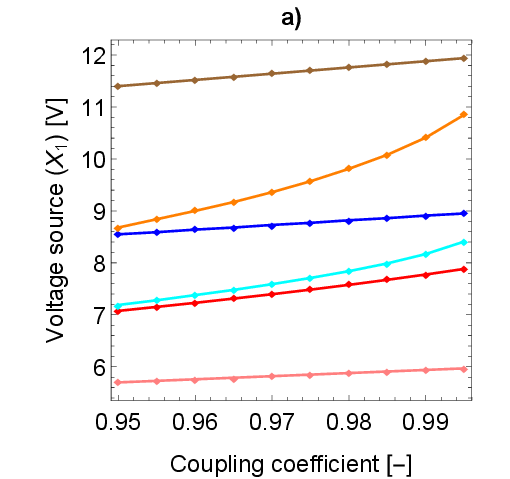}\includegraphics[scale=0.73]{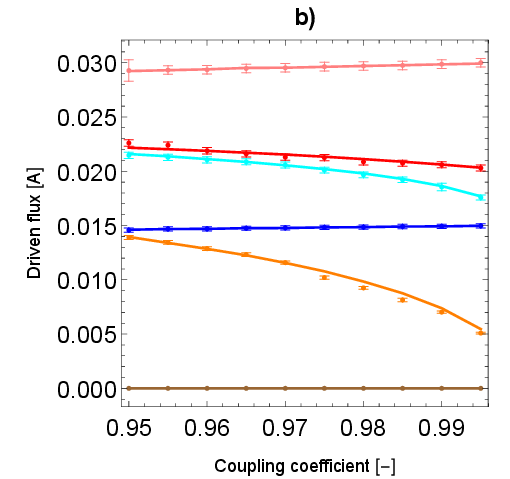}\includegraphics[scale=0.73]{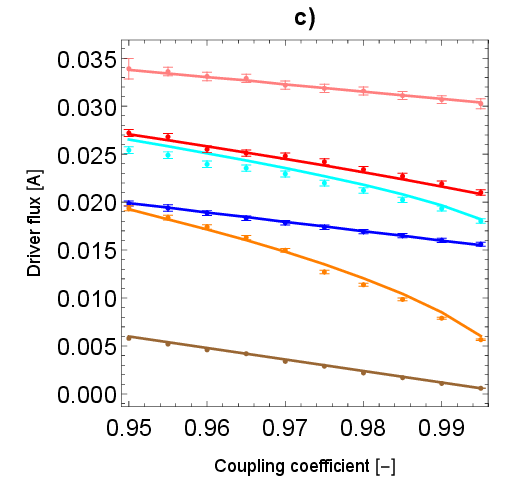}
\par\end{centering}
\caption{\label{fig:jandxvsqs}Graphs of $X_{1}$ (source voltage), $J_{1}$
(driven flux) and $J_{2}$ vs $q$. All of the points associated with
the experimental measurements are shown in diamonds while theoretical
models are depicted by solid lines. In a) a hierarchy between the
different values of $X_{1}$ is displayed. In b) and c) there is also
a trend between the operation regimes, they are expressed as the energy
consumption in each mesh. The color code for $MPO$, $MP\eta$, $M\Omega$,
$MEF$, $M\eta$ and $mdf$ are: pink, red, cyan, blue, orange and
brown, respectively.}
\end{figure*}

The nominal resistance values $R_{2}$ were calculated to check the
existence of the optimal operation regimes, by considering 10 different
values of $q\in\left[0.950,0.995\right]$, taken in steps of $0.05$.
This interval is used in analogy to the results reported by Stucki
\cite{Stucki80}, in which he considered that values of $q\in\left[0.95,0.97\right]$
show the optimal economic degrees of coupling. For this reason $R_{1}$
and $R_{3}$ were fixed at a constant value of $100\Omega$. By using
Eq. (\ref{eq:couplwitrs}), we estimated the $R_{2}$ values for each
$q$.

The next thing was to consider the triad of values for resistors $R_{1}$,
$R_{2,}$ and $R_{3}$ (see Tab. \ref{tab:Resisval}), and a fixed
value of $X_{2}=12\,\textrm{V}$ for the DC voltage source related
to the driver flux with the purpose of having completely characterized
the steady states. From Eqs. (\ref{eq:x1drivmdf}), (\ref{eq:x1drivmpo}),
(\ref{eq:x1drivmefic}), (\ref{eq:x1drivmef}), (\ref{eq:x1drivmomf})
and (\ref{eq:x1drivmep}), the values of $X_{1}$ were found as a
function of any operation mode. Thus, with the values of the resistors
$R_{i}$ and the adjusted values for the DC voltage sources $X_{k}$,
the assembled electric circuit was put inside a container with dielectric
oil to emulate its conditions as an isothermal energy converter and,
then measure the values of the driven and driver fluxes (electric
currents), by considering an enough relaxation time each $5\,\textrm{min}$
approximately to guarantee their stability (steady states).

In Fig. \ref{fig:jandxvsqs}, we display the theoretical and experimental
trend that the voltage source $X_{1}$, as well as the electron fluxes
$J_{1}$ and $J_{2}$ present, according to the operating regimes.
In the case of $X_{1}$ values, they have as upper bound $X_{2}=12V$
(the value of the fixed force). It is important to note that $X_{1}$
values in $MP$ regime, are almost halved with respect to $X_{2}$.
The purpose of the above-mentioned graphs (Fig. \ref{fig:jandxvsqs})
is to compare the behavior of $(J_{1},\,J_{2},\,X_{1})$ that the
linear energy converter model predicts with direct measurements of
electric current and voltages through a digital multimeter.

The percentage errors in the measurements for every process variable
can be estimated. In the case of power output, the maximum percentage
error is calculated for the $M\eta$ regime, its average is $\langle\Delta P_{\%}^{M\eta}\rangle\approx6.18\%$.
The efficiency presents a maximum mean percentage error when the system
works in the $M\Omega$ regime, which is $\langle\Delta\eta_{\%}^{M\Omega}\rangle\approx2.75\%$.
Finally, dissipation function have the same maximum error in the $M\Omega$
regime, whose value is $\langle\Delta\Phi_{\%}^{M\Omega}\rangle\approx7.98\%$.
In general, the foregoing shows that the fluctuations produced by
the disturbance of the system (interaction with the measuring instruments)
are small. This fact guarantees the electric circuit reaches a steady
state.

\begin{figure*}
\includegraphics[scale=0.73]{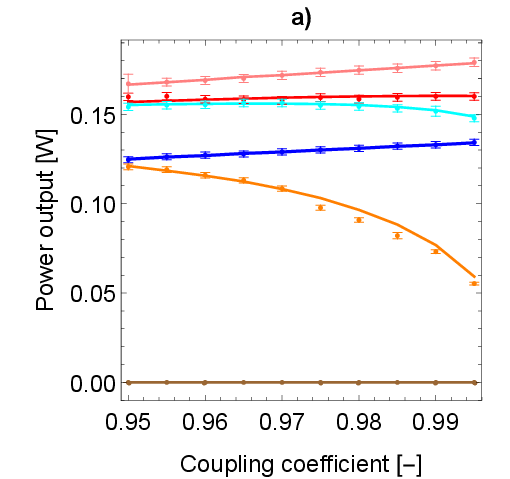}\includegraphics[scale=0.73]{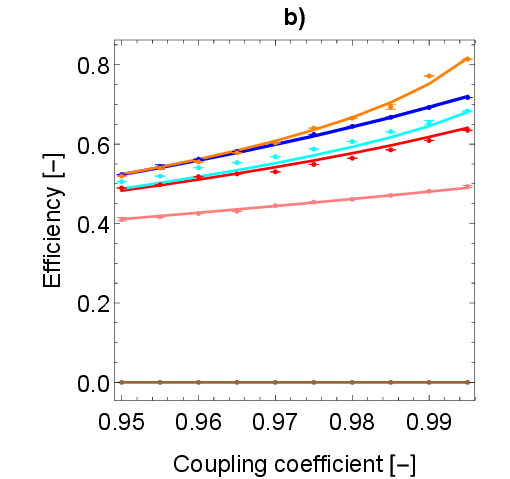}\includegraphics[scale=0.73]{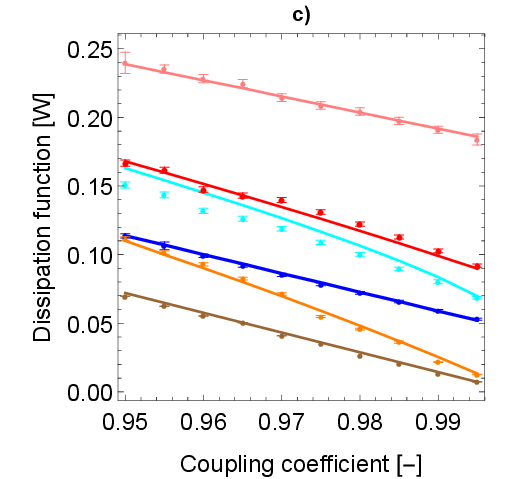} 

\caption{\label{fig:prosvarvsqs}Graphs of the three process variables vs $q$.
All of the points that correspond to the experimental measurements
are shown (in diamonds) for a) power output, b) efficiency and c)
dissipation function, while the trend given by the theoretical models
are depicted in solid lines. The color code for $MPO$, $MP\eta$,
$M\Omega$, $MEF$, $M\eta$ and $mdf$ are: pink, red, cyan, blue,
orange and brown, respectively.}
\end{figure*}

Finally, in Fig. \ref{fig:prosvarvsqs} the behavior predicted by
theorems enunciated in Section 3 is depicted for the three process
variables. Thus, all of physically attainable operation modes for
an isothermal linear energy converter are bounded between the maximum
power output and minimum dissipation function regimes, while the optimum
operation ones lies between the maximum power output and the maximum
efficiency regimes.

\section{Conclusions}

The whole set of phenomena that can be described by the so--called
steady--states, few of them have simple uncoupled processes, i.e,
a large number of phenomena have the characteristic of being spontaneous
and non-interacting (Symmetry Principle). In fact, living and man-made
systems characterized by a single process could be said to undoubtedly
satisfy Prigogine's theorem, since this extremal principle considers
unconstrained forces and therefore systems reach a diffusive regime
when energy transfer processes are carried out.

There is a divided opinion on the validity of the Minimum Entropy
Production Theorem and the Principle of Maximum Entropy Production,
which largely explain all the processes that occur near of the equilibrium
state. In our opinion, if it is desired to obtain a useful energy
available through certain coupled processes (energy conversion), and
that can be modeled to a large extent by means of so--called energy
converters within the context of LIT, then it cannot be expected a
minimum entropy production. Furthermore, we have established optimization
criteria that are associated with characteristic steady--states,
which are delimited between the $MPO$ (maximum entropy production)
and $M\eta$ operation regimes. That is, other energy conversion theorems
can be stated as long as the trade-off between the process variables
of the energy converter is well specified.

In the simple experiment that was designed, we show on the one hand,
that the way in which voltage sources (thermodynamic forces) are tuned,
lead us to establish a unique performance and on the other hand, that
the steady--states associated with an energetic objective function
are physically accessible. Furthermore, the state described by the
minimum dissipation function leads us to a zero energy conversion.
\begin{acknowledgments}
This work was partially supported by Instituto Polit\'ecnico Nacional:
SIP--project numbers: 20221365 and 20221415, COFAA--Grants: 5406 and 5419, EDI--Grants: 1750 and 2248; and SNI--CONACYT Grants: 10743
and 16051, MEXICO.
\end{acknowledgments}

\vspace*{0.5cm}

\appendix

\section{Some algebraic properties of the optimal performance regimes}

From entropy production definition as a function of the spontaneous
and non--spontaneous forces, the surface $\Phi$ characterized by
an ordered pair $\left(X_{1},X_{2}\right)$, lead us to define a vector
space given by the basis vectors $\mathcal{X}=\{(X_{1},0),(0,X_{2})\}$.
Besides, as it is possible to express $X_{1}$ as a function of $X_{2}$,
then 
\begin{eqnarray*}
H_{(X1,X2)} & := & \{(X_{1},X_{2})|X_{1}=AX_{2},\text{ with }\\
 &  & A\text{ a fixed and }X_{2}\text{ an arbitrary }\in\Re\}
\end{eqnarray*}
defines the proper subspace of such vector space \cite{Grossman94}.
An energy converter delimits the vector space of these linear irreversible
processes in the region with $X_{1}<0$ and $X_{2}>0$. From Eqs.
(\ref{eq:opXvp}) and (\ref{eq:opXvp2}), we note that these optimal
operation modes lie in the subspace $H_{(X1,X2)}$. That is, according
to the physical information of $A$, a linear energy converter can
access different physical realizations as long as the thermodynamic
forces are tuned under the respective constraints.
\begin{figure}
\begin{centering}
\includegraphics[scale=0.53]{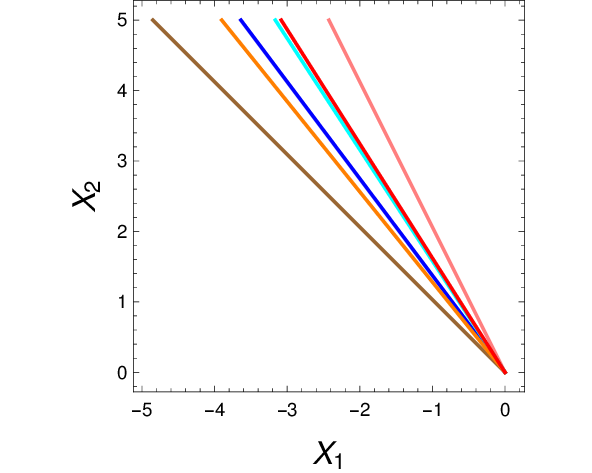}
\par\end{centering}
\caption{\label{fig:spacX1X2}Graphic sketch of the proper sub-space $H_{(X1,X2)}$
the for different values of $A'$, $X_{1}$ and $X_{2}$ are the so-called
generalized thermodynamic forces. An order is noted for the 6 operation
regimes, \textit{mdf} in blue, \textit{M$\eta$} in cyan, \textit{MEF}
in green, \textit{M$\Omega$} in black, \textit{MP$\eta$} in purple
and \textit{MPO} in red. All lines were depicted under the normalization
condition $A'=\nicefrac{A\sqrt{L_{11}}}{\sqrt{L_{22}}}$ and with
$q=0.97$.}
\end{figure}

Since the restriction $X_{1}^{Y}=A^{Y}X_{2}$, with $Y$ the optimal
operation mode, is associated with a particular extra-thermodynamic
condition; in Fig. \ref{fig:spacX1X2} we can observe the geometric
representation of proper sub-spaces of $\mathcal{X}$, restricted
to the physical region of linear energy converters given by the Eqs.
(\ref{eq:opXvp}) and (\ref{eq:opXvp2}).

The order displayed in Fig. \ref{fig:spacX1X2} for each operating
regime, can be viewed as the availability of a linear energy converter
to locate physical realizations when thermodynamic forces are associated
in such a way as to achieve a particular goal in energy conversion.
That is, if we measure the distance between an arbitrary point $(X_{1},X_{2})$
and the equilibrium state $(0,0)$ for each proper subspace, 
\[
d(\vec{0},\vec{X})=||\vec{X}-\vec{0}||=X_{2}\sqrt{1+(A^{Y})^{2}},
\]
we observe that $A^{MPO}<A^{MP\eta}<A^{M\Omega}<A^{MEF}<A^{M\eta}<A^{mdf}$
and therefore the distances revels the same order as inequalities
{[}see Eq. (\ref{eq:hierDPE}){]}.

\nocite{*}

\end{document}